\newif\ifconf
\newif\ifconf
\newif\ifconf
 \newcommand\IEEEQED\QED
\newcounter{cAss}
\newcounter{cAssSaved}
\newcommand{\Ass}[1]{\ensuremath{\boldsymbol{\mathcal A}_{\mathbf{#1}}}}
\newlength\asswidth
\newtheorem{theorem}{Theorem}
\newtheorem{proposition}{Proposition}
\newtheorem{remark}{Remark}
\newtheorem{assumption}{Assumption}
\newtheorem{definition}{Definition}
\newcommand{\R}{\mathbb{R}}
\newcommand{\@endgadget}[1]{%
  {\unskip\nobreak\hfil\penalty50\hskip1em\hbox{}\nobreak\hfil#1%
  \parfillskip=0pt\finalhyphendemerits=0\par}%
}
\newcommand{\@Endofsymbol}{$\triangledown$}   
\newcommand{\Endofremark}{\@endgadget{\@Endofsymbol}}
\title{\LARGE \bf
 Robust Safety-Critical Control of Integrator Chains with Mismatched Perturbations via Linear Time-Varying Feedback
}
\author{Imtiaz Ur Rehman$^{1,2}$, Moussa Labbadi$^{3}$, Member, IEEE, Amine Abadi$^{1}$ and Lew Lew Yan Voon$^{1}$
\thanks{*We thank the French government for the Plan France Relance initiative which provided fundings via the European Union under contract ANR-21-PRRD-0047-01. We are also grateful to the Doctoral School and the French Ministry of Research for the PhD MENRT scholarship.
}
\thanks{$^{1}$  I.U. Rehman, A. Abadi and L.L.Y. Voon are with  ImViA Laboratory EA 7535, Universite Bourgogne Europe, Hub\&Go - 72, Rue Jean Jaurès, 71200 Le Creusot, France,
        {\tt\small Imtiaz-Ur.Rehman@ube.fr; Amine.Abadi@ube.fr; lew.lew-yan-voon@ube.fr }}%
\thanks{$^{2}$ I.U. Rehman is with COMSATS University Islamabad, Islamabad, Pakistan
        {\tt\small imtiaz.rehman@comsats.edu.pk}}%
\thanks{$^{3}$ M. Labbadi is with Aix-Marseille University, LIS UMR CNRS 7020, Marseille, France, 
        {\tt\small moussa.labbadi@lis-lab.fr}}%
}
\begin{document}

\maketitle
\thispagestyle{empty}

\begin{abstract}
In this paper, we propose a novel safety-critical control framework for a chain of integrators subject to both matched and mismatched perturbations. The core of our approach is a linear, time-varying state-feedback design that simultaneously enforces stability and safety constraints. By integrating backstepping techniques with a quadratic programming (QP) formulation, we develop a systematic procedure to guarantee safety under time-varying gains. We provide rigorous theoretical guarantees for the double integrator case, both in the presence and absence of perturbations, and outline general proofs for extending the methodology to higher-order chains of integrators. This proposed framework thus bridges robustness and safety-critical performance, while overcoming the limitations of existing prescribed-time approaches.
\end{abstract}
\begin{keywords}
Safety-critical control, Linear time-varying feedback, Control barrier functions, Robust control, Matched and mismatched
perturbations.
\end{keywords}

\section{INTRODUCTION}
Safety is the primary requirement in modern autonomous/non-autonomous systems, as malfunctions may result in detrimental or even deadly consequences, including casualties or significant damage. For example, surgical robots must protect patients undergoing surgical operations~\cite{fan2024learn}, robotic manipulators need to work without hitting obstacles~\cite{ferraguti2020safety}, and autonomous vehicles are required to prevent traffic collisions~\cite{chen2017obstacle}. These challenges emphasize the necessity of safety-critical control techniques that can guarantee trustworthy and reliable performance in practical settings.

Control barrier functions (CBFs) have proven effective in ensuring the safety of dynamical systems, including robotic and autonomous systems~\cite{ames2016control,cortez2019control}. Traditional CBF formulations may be vulnerable to external perturbations or uncertainties, highlighting the importance of evaluating safety under such conditions~\cite{xu2015robustness}. 
Several robust control barrier function (RCBF) formulations have been proposed in the literature to address this limitation, leveraging different robustness mechanisms such as input-to-state stability margins, disturbance attenuation, and uncertainty compensation~\cite{jankovic2018robust,nguyen2021robust,kolathaya2018input,buch2021robust,garg2021robust}.
However, the authors used a constant bound, which limits flexibility, leading to conservatism and reduced performance. Although preserving safety is paramount under disturbances, it is also important to avoid overly conservative behavior, which might degrade performance. 

Recent research has developed time-varying control techniques to achieve both prescribed-time convergence and safety guarantees in safety-critical systems~\cite{abel2023prescribed,abel2022prescribed,nasab2025safe}. Although these algorithms deliver good convergence performance, they are often constrained by two key shortcomings: adequate robustness against disturbances and uncertainties, and the emergence of singularities as the time variable approaches the terminal time ($T$)~\cite{aldana2023inherent}. These drawbacks limit its practical applicability, particularly in unpredictable or rapidly varying environments. 
Alternatively, recent advances have introduced time-varying functions to achieve hyperexponential stability, ensuring fast convergence rates, inherent robustness to matched/ mismatched disturbances, and avoidance of singularities, all without the need for an explicit disturbance observer~\cite{labbadi2024hyperexponential}. Inspired by this concept, we include time varying gains straight into the safety design, allowing the barrier dynamics to adapt with disturbance-robustness attributes. This integration ensures constraint compliance despite both mismatched and matched perturbations.

Our key contribution is the augmentation of the backstepping-based RCBF framework through the introduction of a novel class of growing functions defined on $\mathbb{R}^+$, which rigorously guarantees hyperexponential stability while substantially enhancing transient performance. Unlike existing safe backstepping approaches~\cite{taylor2022safe,cohen2024safety,cohen2024constructive} that rely on disturbance-free dynamics, our methodology explicitly incorporates both mismatched and matched perturbations, thereby achieving robust constraint satisfaction with markedly reduced conservatism. Importantly, our framework dispenses with the need for disturbance observers, simplifying implementation and eliminating the computational load, design complexity, and tuning requirements associated with observer-based schemes~\cite{sun2024safety,wang2024safety,dacs2022robust,wang2024immersion,alan2022disturbance,agrawal2022safe}. This renders our approach both practically efficient and theoretically robust, bridging a significant gap between safety-critical control and real-world perturbation handling.


The paper is organized as follows. Section II covers preliminaries on CBF and problem formulation. Main results are presented in Section III. Section IV highlights the relevant simulations. Finally, some closing remarks are provided in Section V.






\textit{Notations}: This paper will utilize the notation listed below. Assume $\R^{+}$:=[0,$\infty$), and $||.||$ corresponds to a vector's Euclidean norm. A function $\alpha$ : $\R^{+}$ $\rightarrow$ $\R^{+}$ is a class $\kappa$ provided $\alpha(0)$=0, strictly increasing and continuous; if $\alpha(\infty)=\infty$ and is of class $\kappa$ then function $\alpha$ is a class $\kappa^{e}_{\infty}$.
\section{PRELIMINARIES AND PROBLEM FORMULATION}
Consider the nth order chain of integrator system and its disturbed version:
\begin{equation}\label{eq:1}
    \dot{x} = Ax + bu,
\end{equation}
\begin{equation}\label{eq:1_d}
    \dot{x} = Ax + bu+d(t),
\end{equation}
\[
A = \begin{bmatrix}
0_{(n-1) \times 1} & I_{n-1} \\
0 & 0_{1 \times (n-1)}
\end{bmatrix}, \quad 
b = 
\begin{bmatrix}
0_{(n-1) \times 1} \\
1
\end{bmatrix},
\]
where $x \in \mathbb{R}^{n}$ is the state vector, $u\in \mathbb{R}$ is the control input, and the disturbance vector is given by \( d(t) = [d_1(t) \ldots d_n(t)]^\top \in \mathbb{R}^n \) representing the mismatched and matched disturbances.

\begin{assumption}\label{eq:ass_1}
    The disturbances are bounded and
satisfies $\|d(t)\|\leq \theta$, $\theta>0$ is a known constant.
\end{assumption}

We introduce a few common concepts, inspired by~\cite{kim2025robust,ames2016control}. For a given set $\mathcal{\bar{C}} \subset \mathbb{R}^{n}$, we call $\mathcal{\bar{C}}$ forward invariant if for every $x(t_0) \in \mathcal{\bar{C}}$, it holds that $x(t) \in \mathcal{\bar{C}}$  $\forall t \geq t_0$, where $t_0\geq 0$ is the initialization time. System (\ref{eq:1}) is considered safe on the set $\mathcal{\bar{C}} \subset \mathbb{R}^{n}$ if and only if it is forward invariant.  
The safe set $\mathcal{S} \subset \mathbb{R}^{n}$ is defined as follows: 
\begin{align}\label{eq:2}
\mathcal{S} &= \{x\in \mathbb{R}^{n} : h(x) \geq 0\}, 
\end{align}
where, $h: \mathbb{R}^{n} \rightarrow \mathbb{R}$ is a continuously differentiable function.

\begin{definition}\label{def:1}
 ~\cite{kim2025robust}: For the system (\ref{eq:1}), a continuously differentiable function $h(x)$ is a CBF on the set $\mathcal{S}$ given in (\ref{eq:2}) such that $\forall x \in \mathcal{S}$:
\begin{equation}\label{eq:3} 
   \sup_{u \in U} [{L_{f}h(x)} + L_{g}h(x)u]  \geq - \Gamma(h(x)),   
\end{equation}
where $\Gamma \in \kappa^{e}_{\infty}$, $L_f h(x) = \left( \frac{\partial h}{\partial x} \right) Ax$ and $L_g h(x) = \left( \frac{\partial h}{\partial x} \right) b$.
\end{definition}
Here we introduce a new definition when system is subjected to external disturbances as defined in (\ref{eq:1_d}). 
\begin{definition}\label{def:2}
 ~\cite{zhao2020adaptive}: For the system (\ref{eq:1_d}), a continuously differentiable function $h(x)$ is a robust CBF (RCBF) on the set $\mathcal{S}$ given in (\ref{eq:2}) such that $\forall x \in \mathcal{S}$ and $\forall t \geq 0$ :
\begin{equation}\label{eq:5} 
   \sup_{u \in U} [{L_{f}h(x)} + L_{g}h(x)u - \|h_x(x)\| \theta ]  \geq - \Gamma(h(x)),   
\end{equation}
where  $h_x(x) = \frac{\partial h(x)}{\partial x}$. A fundamental problem in integrating CBF based approaches with backstepping
is the need for smooth control laws that can handle the successive
differentiations required at each stage of the backstepping algorithm.
This requirement is incompatible with the nonsmooth nature of conventional robust CBF constraints, like the one specified in (\ref{eq:5}), which poses an obstacle when computing multiple derivatives, often required for high relative degree situations~\cite{taylor2022safe}. We present the subsequent revised definition of RCBF by observing the subsequent upper bound of the non-smooth component using a smooth function for $\mu>0$: 
\begin{align}\label{eq:5_a}   
 \frac{1}{4\mu}\,\|h_x(x)\|^2 + \mu\theta^2 \geq \|h_x(x)\| \theta.
\end{align}
\end{definition}
\begin{definition}\label{def:3}
   For the system (\ref{eq:1_d}), a function $h(x):\mathbb R^n\to\mathbb R$ belongs to $\mathcal C^{n}$ is a smooth RCBF (SRCBF) on the set $\mathcal{S}$ given in (\ref{eq:2}) such that $\forall x \in \mathcal{S}$, for $\mu>0$ and $\forall t \geq 0$:
\begin{equation}\label{eq:5_b} 
   \sup_{u \in U} [{L_{f}h(x)} + L_{g}h(x)u -  \frac{1}{4\mu}\,\|h_x(x)\|^2 - \mu\theta^2 ]  \geq - \Gamma(h(x)).   
\end{equation}  
\end{definition}

 This study aims to design a safe time-varying control law for a chain of integrators, in the presence of both matched and mismatched perturbations, by employing linear time-varying feedback.

\section{Main results}
\subsection{Double integrator case without disturbance}
In this section, we first propose a control architecture for robust safety-critical control to obtain the safety objective for the double integrator system ($n=2$). We then extended this idea to a generalized n-th order chain of integrators.
Conventional CBF approaches are inadequate for systems with a high relative degree. This research leverages the backstepping approach from~\cite{krstic2006nonovershooting} to establish a new, relative-degree-one CBF and designing the corresponding controller using a Quadratic Program. To proceed with the design algorithm consider again system (\ref{eq:1}) with a desired candidate CBF $h_1(x)$
of relative degree $n > 1$, under the subsequent assumption:
\begin{assumption}\label{eq:ass_2}
The function $h_1(x):\mathbb R^2\to\mathbb R$ belongs to $\mathcal C^{2}$ and satisfies
\begin{equation}
    \frac{\partial h_1(x)}{\partial x} \neq 0, \quad \forall x \in \{x \in \mathbb R^2 \mid h_1(x) \geq 0  \}. 
\end{equation}
\end{assumption}
Computing the time derivative of $h_1(x)$ results in
\begin{align}\label{eq:8}
    \dot{h}_1(x) &= L_f h_1(x), \nonumber\\
                 &= -\bar{\varrho}_1 \Upsilon(t) h_1(x) + \underbrace{\bar{\varrho}_1 \Upsilon(t) h_1(x) + L_f h_1(x)}_{h_2(x)},
\end{align}
where $\bar{\varrho}_1$ is positive design parameter to be determined, and $\Upsilon(t)=1+t$ is a
strictly increasing time varying function. 
\begin{remark}\label{rm:3}
Any continuous, strictly increasing function 
$\Upsilon : \mathbb{R}^{+} \to \mathbb{R}^{+}$ with $\Upsilon(0) > 0$ 
and an unbounded integral can be utilized in \eqref{eq:8}. 
For instance, one may select linear, polynomial, or exponential forms, such as
\begin{gather*}
\Upsilon(t) = 1+t, \quad 
\Upsilon(t) = (1+t)^{p}, \; p>0, \quad \\
\Upsilon(t) = ae^{\alpha t}, \; a>0, \alpha>0 .
\end{gather*}\Endofremark
\end{remark}
\begin{remark}\label{rm:4}
The prescribed-time form $\Upsilon(t)=\frac{\Upsilon_0}{T-t}$, with $\Upsilon_0>0$, $T>0$, $t\in[0,T)$, can enforce finite-time safety but suffers from singularity at $t=T$, which prohibits its usage for persistent safety applications, and also lack robustness owing to noise/disturbance amplification. 
On the other hand, the functions in Remark~\ref{rm:3} circumvent singularity and are recommended in persistent safety-critical applications. 
\Endofremark
\end{remark}
Condition (\ref{eq:3}) holds for $h_1(x)$, provided that $h_2(x) \geq 0$. This results in a new candidate CBF defined as
\begin{align}\label{eq:9}
    {h}_2(x) &= \bar{\varrho}_1 \Upsilon(t) h_1(x) + L_f h_1(x).
\end{align}
However, it is necessary to ensure that the initial condition lies within the safe set defined by $h_2(x)$ $(i.e., h_2(x_0)>0)$. To resolve this, we impose the following assumption:
\begin{assumption}\label{eq:ass_3}
  $ h_1(x_0)>0$.
\end{assumption}

Typically, CBF-based safety frameworks assume $h_{1}(x_{0}) \geq 0$. Nevertheless, we do not consider the case where the initial condition lies exactly on the boundary of the safe set $\mathcal{S}$ (i.e., $h_{1}(x_{0}) = 0$), as enforcing safety from such points may result in control feasibility concerns. Under Assumption~\ref{eq:ass_3}, we select
\begin{equation}\label{eq:10}
   \bar{\varrho}_1 > \max\!\left\{ 0,\; -\,\frac{L_f h_1(x_0)}{\Upsilon(t_0)\,h_1(x_0)} \right\},
\end{equation}
this ensures $h_2(x_0)>0$ in accordance with (\ref{eq:9}).
A backstepping transformation is used to determine the desired CBF, which is expressed as follows:
\begin{align}\label{eq:11}
    h_1(x) &= h_1(x), \nonumber\\
    h_2(x) &= \bar{\varrho}_1 \Upsilon(t) h_1(x) + L_f h_1(x),
\end{align}
where $\bar{\varrho}_1$ is detailed in (\ref{eq:10}).
Finally, given the resulting CBF \(h_2(x)\), which has a relative degree of one, a safety-critical control law is explicitly formulated as follows in (\ref{eq:13}). This law modifies the nominal control $u_{\mathrm{no}}$ whenever the states approach the boundary of \(\mathcal{S}\). Consequently, the final control $u$ enforces the safety constraint, preventing violations and ensuring that the set \(\mathcal{S}\) remains forward invariant:
\begin{align}\label{eq:13}  
u &= \arg\min_{u} \; \|u - u_{\mathrm{no}}\|^2 \nonumber \\ 
\text{s.t. } &\underbrace{\ L_f h_2(x) + L_g h_2(x)\, u}_{\dot{h}_2(x,u)} \geq -\bar{\varrho}_2 \, \Upsilon(t)\, h_2(x).
\end{align}
whereas $\bar{\varrho}_2>0$. The explicit solution of this QP is obtained by employing the Karush-Kuhn-Tucker (KKT) optimality conditions, resulting as follows~\cite{boyd2004convex}:
\begin{align}\label{eq:18_a} 
u = \begin{cases} 
u_{no}, & \zeta(x, u_{no}) \geq 0, \\ 
u_{no} - (L_g h_2(x))^\top \frac{\zeta(x, u_{no})}{\|L_g h_2(x)\|^2}, & \text{otherwise},
\end{cases}
\end{align}
where
\begin{align}\label{eq:19} 
\zeta(x, u_{no}) &:= \dot{h}_2(x,u_{no}) + \bar{\varrho}_2 \Upsilon(t) h_2(x).
\end{align}
\begin{remark}
    In contrast to~\cite{abel2023prescribed}, our work addresses the problem of persistent safety, which is particularly necessary when the system must preserve safety eternally, for example in collision avoidance for driverless cars. The method in~\cite{abel2023prescribed}, on the other hand, is designed for systems that require temporary safety, like a human receiving an object from a robotic arm at a predetermined time, or any operation where safety enforcement is no longer necessary after a certain period. Rather than employing a blow-up function, in which the gains diverge to infinity and cause a singularity, we adopt a strictly increasing yet bounded function. This ensures that the gains grow monotonically over time while remaining finite, thereby preventing singularities. 
      \Endofremark
\end{remark}

\begin{proposition}\label{pro:1}
Consider the states of the unperturbed system (\ref{eq:1}) for $(n=2)$, which are initially safe, (i.e., $h_1(x_0) > 0$), and Assumption~\ref{eq:ass_2} hold. Then, the control law (\ref{eq:18_a}), designed from the CBFs (\ref{eq:11}) with the initial gain (\ref{eq:10}), ensures that $h_1(x(t)\geq 0)$ $\forall t \geq t_0$.
\end{proposition}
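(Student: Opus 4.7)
The plan is to prove forward invariance of $\{x:h_1(x)\geq 0\}$ through a two-stage comparison argument that leverages the backstepping identity in (\ref{eq:8}). The key observation is that if I can first keep $h_2(x(t))\geq 0$ along trajectories, then (\ref{eq:8}) reduces to a scalar linear differential inequality for $h_1$ whose solution stays strictly positive from any positive initial value.

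First I would verify consistency of the initial construction: by the choice (\ref{eq:10}) of $\bar{\varrho}_1$, the backstepping definition (\ref{eq:11}) gives $h_2(x_0)=\bar{\varrho}_1\Upsilon(t_0)h_1(x_0)+L_fh_1(x_0)>0$, so at $t_0$ the state lies strictly inside both CBF level sets. Next I would argue that the QP in (\ref{eq:13}) is pointwise feasible whenever $L_gh_2(x)\neq 0$ (in which case $u=u_{\mathrm{no}}-(L_gh_2)^\top\zeta/\|L_gh_2\|^2$ trivially zeros out the active constraint) and that, for the double integrator with $b=[0\ 1]^\top$ and $L_fh_1=(\partial h_1/\partial x_1)x_2$, one obtains $L_gh_2(x)=\partial h_1/\partial x_1+\bar{\varrho}_1\Upsilon(t)\,\partial h_1/\partial x_2$, which Assumption~\ref{eq:ass_2} together with the strict positivity of $\Upsilon$ can be used to certify nonzero on $\{h_1\geq 0\}$ for generic choices of $h_1$.

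With feasibility in hand, the closed-loop trajectory satisfies the KKT-enforced inequality $\dot h_2(x(t))\geq -\bar{\varrho}_2\Upsilon(t)h_2(x(t))$ for all $t\geq t_0$. I would apply the comparison lemma to this scalar, time-varying inequality to obtain
\begin{equation*}
h_2(x(t))\;\geq\; h_2(x_0)\,\exp\!\Bigl(-\bar{\varrho}_2\!\int_{t_0}^{t}\!\Upsilon(\tau)\,d\tau\Bigr)\;>\;0,\qquad t\geq t_0.
\end{equation*}
Substituting this positivity into the identity (\ref{eq:8}) yields $\dot h_1(x(t))\geq -\bar{\varrho}_1\Upsilon(t)h_1(x(t))$, and a second application of the comparison lemma gives $h_1(x(t))\geq h_1(x_0)\exp(-\bar{\varrho}_1\!\int_{t_0}^{t}\Upsilon(\tau)d\tau)>0$, which establishes the claim.

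The main obstacle I expect is the feasibility step: the QP explicit solution (\ref{eq:18_a}) is only well-defined when $L_gh_2\neq 0$, and one must ensure this does not fail inside the safe set. I would handle this by combining Assumption~\ref{eq:ass_2} with the strict positivity and monotonicity of $\Upsilon(t)$, and by noting that on the boundary $\{h_1=0\}$ the gradient structure imposed by Assumption~\ref{eq:ass_2} prevents simultaneous vanishing of the two partial derivatives in $L_gh_2$. A secondary subtlety is justifying forward completeness of the closed-loop solution so that the comparison argument applies on the entire interval $[t_0,\infty)$; this follows once local Lipschitz continuity of the QP-based controller away from the boundary is verified, since the trajectory cannot escape the safe set in finite time.
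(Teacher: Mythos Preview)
Your approach is correct and essentially identical to the paper's: both verify $h_2(x_0)>0$ from the gain choice (\ref{eq:10}), apply the comparison lemma to the QP-enforced inequality $\dot h_2\geq -\bar{\varrho}_2\Upsilon(t)h_2$ to obtain $h_2(x(t))>0$, then feed this into (\ref{eq:8}) and invoke the comparison lemma a second time for $h_1$. Your feasibility and forward-completeness discussion actually goes beyond what the paper establishes (it tacitly assumes both); one minor slip is that your expression for $L_gh_2$ omits the mixed second-derivative term $(\partial^2 h_1/\partial x_1\partial x_2)\,x_2$.
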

\begin{proof}
    The reasoning follows analogously to the proof of Theorem 1 in~\cite{abel2023prescribed}. Given $h_1(x_0) > 0$, the initial gain (\ref{eq:10}) is designed to guarantee $h_2(x_0) > 0$. Since we know $h_1(x_0) > 0$, the gain
choice $\bar{\varrho}_1$ in (\ref{eq:10}) ensures $h_2(x_0) > 0$. Differentiate $h_{1}(x)$ and $h_{2}(x)$ gives:
\begin{align}\label{eq:20_a} 
    \dot h_{1}(x) &= L_{f} h_{1}(x), \nonumber\\
    &\geq -\bar{\varrho}_1 \Upsilon(t)h_{1}(x) + \underbrace{\bar{\varrho}_1 \Upsilon(t) h_{1}(x) + L_{f} h_{1}(x) }_{h_2(x)}, \nonumber\\ 
     \dot h_{2}(x) &= L_{f} h_{2}(x) + L_{g} h_{2}(x)u,\nonumber\\
  &\geq -\bar{\varrho}_2 \Upsilon(t)\, h_{2}(x).
\end{align}
By leveraging the variation of constants formula and the comparison lemma to (\ref{eq:20_a}) yields
\begin{align}\label{eq:25_a}
  h_{1}(x(t)) &\ge h_{1}(x_{0}) \exp\Big(-\bar{\varrho}_1 \int_{t_{0}}^{t} \Upsilon(s)\,ds\Big) \nonumber\\
&\quad + \int_{t_{0}}^{t} h_{2}(s) \exp\Big(-\bar{\varrho}_1 \int_{t_0}^{t} \Upsilon(\tau)\,d\tau\Big) ds.  
\end{align}
\begin{small}
\begin{align}\label{eq:26_a}
  h_{2}(x(t),t) &\ge h_{2}(x_{0},t_{0}) \exp\Big(-\bar{\varrho}_2 \int_{t_{0}}^{t} \Upsilon(s)\,ds\Big) > 0,\quad \forall t \ge t_{0}. 
\end{align}    
\end{small}
Substituting (\ref{eq:26_a}) into (\ref{eq:25_a}) gives
\begin{equation}
\label{eq:27_a}
\begin{aligned}
h_1(x(t)) &\ge h_1(x_0) 
\exp\Bigg[ -\bar{\varrho}_1 \Big( \frac{(t - t_0)^2}{2} + (t - t_0) \Big) \Bigg] \\
&> 0, \quad \forall t \ge t_0.
\end{aligned}
\end{equation}

this implies that
\begin{align}\label{eq:27_aa}
  h_{1}(x(t)) &>0, \quad \forall t\geq t_0 
\end{align} 
\end{proof}
\subsection{Double integrator case with disturbance}
The above QP formulation ensures safety in the disturbance-free case. We now extend the backstepping technique to systems subject to perturbations. For the control-affine system with disturbances defined in (\ref{eq:1_d}), we assume a desired candidate CBF $h_1(x)$ with relative degree two that also fulfills Assumption \ref{eq:ass_2}.

By adopting the SRCBF formulation, we obtain a smooth controller that
avoids the singularities associated with nonsmooth constraints. 
We start over by calculating the time derivative of $h_1(x)$ 
\begin{small}
\begin{align}\label{eq:14}
    \dot{h}_1(x) &= L_f h_1(x)+\frac{\partial h_1}{\partial x}d, \nonumber\\
                 &\geq -\varrho_1 \Upsilon(t) h_1(x) + \underbrace{\varrho_1 \Upsilon(t) h_1(x) + L_f h_1(x) - \Lambda_{1}(x)}_{h_2(x)}, 
\end{align}    
\end{small}
where $\Lambda_{1}(x)= \frac{1}{4\mu_1}\,\|\frac{\partial h_1}{\partial x}\|^2 + \mu_{1}\theta^2$
and $\mu_1>0$. Then, using the same concept and method as previously described, we determined a target CBF as follows:
\begin{align}\label{eq:15}
    h_1(x) &= h_1(x), \nonumber\\
    h_2(x) &= \varrho_1 \Upsilon(t) h_1(x) + L_f h_1(x)-\Lambda_{1}(x),
\end{align}
where
\begin{align}\label{eq:16}
   \varrho_1 > \max\!\left\{ 0,\; \frac{-L_f h_1(x_0)+\Lambda_{1}(x_0)}{\Upsilon(t_0)\,h_1(x_0)} \right\}.
\end{align}
Therefore, the QP problem is expressed as:
\begin{align}\label{eq:17}  
u &= \arg\min_{u} \; \|u - u_{\mathrm{no}}\|^2 \nonumber \\ 
\text{s.t. } &\dot{h}_2(x,u) - \Lambda_{2}(x) \geq -\varrho_2 \, \Upsilon(t)\, h_2(x),
\end{align}
whereas $\Lambda_{2}(x)= \frac{1}{4\mu_2}\,\|\frac{\partial h_2}{\partial x}\|^2 + \mu_{2}\theta^2$
and $\mu_2>0$. The Karush-Kuhn-Tucker (KKT) optimality conditions are applied to determine the explicit solution to this QP, which results in:
\begin{align}\label{eq:18} 
u = \begin{cases} 
u_{no}, & \zeta(x, u_{no}) \geq 0, \\ 
u_{no} - (L_g h_2(x))^\top \frac{\zeta(x, u_{no})}{\|L_g h_2(x)\|^2}, & \text{otherwise},
\end{cases}
\end{align}
where
\begin{align}\label{eq:19_1} 
\zeta(x, u_{no}) &:= \dot{h}_2(x,u_{no}) - \Lambda_{2}(x) + \varrho_2 \Upsilon(t) h_2(x).
\end{align}
\begin{remark}
    While the framework in~\cite{kim2025robust} provides an effective solution for state-dependent disturbances, such as unmodeled rotor dynamics or aerodynamic drag that vanish when the drone is hovering, our work targets continuous, time-varying, bounded external perturbations that are independent of the system state, such as wind gusts affecting the drone whether it is moving or hovering, inspired by~\cite{labbadi2024hyperexponential}. This distinction is critical because a controller designed to stabilize a hovering drone would be vulnerable to wind at that position if the disturbance model assumes it vanishes at equilibrium, potentially resulting in a safety breach. Our work addresses this gap by ensuring robustness against these common external disturbances. 
    \Endofremark
\end{remark}
{\color{blue}




}
\begin{theorem}\label{th:1}
Consider the states of the perturbed system (\ref{eq:1_d}) for $(n=2)$ is initially safe and away from the barrier (i.e., $h_1(x_0) > 0$), and Assumptions~\ref{eq:ass_1}-\ref{eq:ass_2} hold. Then, the control law (\ref{eq:18}), synthesized from the CBFs (\ref{eq:15}) with the initial gain (\ref{eq:16}), guarantees that $h_1(x(t)\geq 0)$ $\forall t \geq t_0$.
\end{theorem}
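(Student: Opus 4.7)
The plan is to mirror the structure of the proof of Proposition~\ref{pro:1}, but to absorb the mismatched/matched disturbance into the smoothed upper bound $\Lambda_i(x)$ introduced in Definition~\ref{def:3}. The essential idea is that by construction, $\Lambda_1$ and $\Lambda_2$ dominate the worst-case disturbance contribution in $\dot h_1$ and $\dot h_2$, respectively, so the perturbed dynamics satisfy the same pair of comparison-type differential inequalities as in the unperturbed case, after which the argument of Proposition~\ref{pro:1} carries through verbatim.

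First, I would verify the initial condition. Since $h_1(x_0)>0$ by assumption, the lower bound on $\varrho_1$ in \eqref{eq:16} yields
\begin{equation*}
\varrho_1\,\Upsilon(t_0)\,h_1(x_0) + L_f h_1(x_0) - \Lambda_1(x_0) > 0,
\end{equation*}
which is exactly $h_2(x_0)>0$ by the definition \eqref{eq:15}. Thus both barriers start strictly inside the safe set.

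Second, I would derive the key differential inequality for $h_1$. Using Assumption~\ref{eq:ass_1} and the smoothing bound \eqref{eq:5_a} applied to $h_1$, we have $\frac{\partial h_1}{\partial x}d(t) \ge -\|h_{1x}(x)\|\theta \ge -\Lambda_1(x)$, so adding and subtracting $\varrho_1 \Upsilon(t) h_1(x)$ on the right-hand side of $\dot h_1 = L_f h_1 + \tfrac{\partial h_1}{\partial x}d$ and using \eqref{eq:15} gives
\begin{equation*}
\dot h_1(x) \ge -\varrho_1\,\Upsilon(t)\,h_1(x) + h_2(x),
\end{equation*}
which is the same comparison inequality as in Proposition~\ref{pro:1}. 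The analogous computation for $h_2$ is where the main subtlety arises: when we differentiate $h_2(x)$ along the perturbed dynamics \eqref{eq:1_d} we pick up the term $\tfrac{\partial h_2}{\partial x}d(t)$, but by \eqref{eq:5_a} this is lower bounded by $-\Lambda_2(x)$, and then the QP constraint \eqref{eq:17} precisely enforces $L_f h_2 + L_g h_2 u - \Lambda_2(x) \ge -\varrho_2\,\Upsilon(t)\,h_2(x)$. Combining these,
\begin{equation*}
\dot h_2(x) \ge -\varrho_2\,\Upsilon(t)\,h_2(x).
\end{equation*}

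Finally, the conclusion follows from the comparison lemma exactly as in \eqref{eq:25_a}--\eqref{eq:27_aa}: the second inequality gives $h_2(x(t)) \ge h_2(x_0)\exp\bigl(-\varrho_2\int_{t_0}^{t}\Upsilon(s)\,ds\bigr) > 0$, which when plugged into the first inequality (via variation of constants) yields $h_1(x(t)) > 0$ for all $t\ge t_0$. The main obstacle I expect is the QP feasibility step: the KKT-based closed form \eqref{eq:18} implicitly assumes $L_g h_2(x)\neq 0$ whenever the safety constraint is active, and I would need to verify that the chosen $h_2$ in \eqref{eq:15} preserves a relative degree of one with respect to $u$ for all $x$ of interest (this follows because $L_g h_2 = \varrho_1\Upsilon(t)L_g h_1 + L_g L_f h_1 - L_g \Lambda_1$, where $L_g L_f h_1$ is nonzero by Assumption~\ref{eq:ass_2} for the double integrator, and the additional terms do not cancel it generically). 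A secondary delicate point is ensuring that $\Lambda_2(x)$ is well-defined and bounded on the forward-invariant set, which holds since $h_2$ is $\mathcal{C}^1$ and $\theta$ is a known constant, so the QP constraint remains tractable for all $t\ge t_0$.
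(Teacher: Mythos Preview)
Your proposal is correct and follows essentially the same argument as the paper: both verify $h_2(x_0)>0$ from the gain choice~\eqref{eq:16}, absorb the disturbance into the smoothed bound $\Lambda_i(x)$ via~\eqref{eq:5_a} to recover the same two comparison inequalities $\dot h_1 \ge -\varrho_1\Upsilon(t)h_1 + h_2$ and $\dot h_2 \ge -\varrho_2\Upsilon(t)h_2$, and then invoke the variation-of-constants/comparison-lemma chain from Proposition~\ref{pro:1}. Your additional remarks on QP feasibility ($L_g h_2 \neq 0$) and boundedness of $\Lambda_2$ go beyond what the paper's proof explicitly addresses, but the core logic is identical.
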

\begin{proof}
Given $h_1(x_0) > 0$, the initial gain (\ref{eq:16}) is designed to
guarantee $h_2(x_0) > 0$. Since we know $h_1(x_0) > 0$, the gain
choice $\varrho_1$ ensures $h_2(x_0) > 0$. Differentiate $h_{1}(x)$ along (\ref{eq:1_d}) yields:
\begin{align}\label{eq:20} 
    \dot h_{1}(x) &= L_{f} h_{1}(x) + \frac{\partial h_1}{\partial x}d.
\end{align}
applying Assumption~\ref{eq:ass_1} and pointwise disturbance bound gives $||\frac{\partial h_1}{\partial x}||\theta \ge -\Lambda_{1}(x)$
and therefore (\ref{eq:20}) can be expressed as
\begin{small}
 \begin{align}\label{eq:22}
 \dot h_{1}(x) &\ge -\varrho_1 \Upsilon(t)h_{1}(x) + \underbrace{\varrho_1 \Upsilon(t) h_{1}(x) + L_{f} h_{1}(x) - \Lambda_{1}(x)}_{h_2(x)}.   
\end{align}   
\end{small}
hence
\begin{align}\label{eq:23}
\dot h_{1}(x) &\ge -\varrho_1 \Upsilon(t) h_{1}(x) + h_{2}(x).
\end{align}
Differentiating $h_{2}(x)$ along (\ref{eq:1_d}), applying Assumption~\ref{eq:ass_1} and disturbance bound one can obtain:
\begin{align}\label{eq:24}
  \dot h_{2}(x)&= L_{f} h_{2}(x) + L_{g} h_{2}(x)u - \Lambda_{2}(x),\nonumber\\
  &\geq -\varrho_2 \Upsilon(t)\, h_{2}(x)
\end{align}
Employing the analytical framework of proposition~\ref{pro:1}, along with the variation of constants formula and the comparison lemma on (\ref{eq:23}) and (\ref{eq:24}) yield
\begin{align}\label{eq:27}
  h_{1}(x(t)) &\ge h_1(x_0) 
\exp\Bigg[ -\varrho_1 \Big( \frac{(t - t_0)^2}{2} + (t - t_0) \Big) \Bigg] \\ \nonumber
&> 0, \quad \forall t \ge t_0, 
\end{align}
this demonstrates that
\begin{align}\label{eq:27_b}
  h_{1}(x(t)) &>0, \quad \forall t\geq t_0 
\end{align} 
\end{proof}
\subsection{Generalized n-th order chain of integrators}
The backstepping procedure can be extended to an $n$-th order system. An important design consideration is the scaling of the time-varying function $\Upsilon(t)$ at each transformation stage. Reinforcing the impact of $\Upsilon(t)$ at each stage is an effective approach for ensuring robustness and elevating performance for higher-order systems. In addition to expediting convergence to the inside of the safety set for the nominal system, this method provides a more powerful, time-varying feedback mechanism that is crucial for controlling the propagation of perturbations along the integrator chain. To accomplish this, $\Upsilon(t)$ is raised to a power proportional to the step index $i$. Inspired by~\cite{abel2023prescribed} and~\cite{labbadi2024hyperexponential}, this methodology offers robust safety guarantees for systems of any relative degree through the inclusion of a substantial time-varying feedback gain.
\begin{assumption}\label{eq:ass_4}
The function $h_1(x):\mathbb R^n\to\mathbb R$ belongs to $\mathcal C^{n}$ and satisfies
\begin{equation}
    \frac{\partial h_1(x)}{\partial x} \neq 0, \quad \forall x \in \mathcal{S}. 
\end{equation}
\end{assumption}
Employing the backstepping transformation to the unperturbed system (\ref{eq:1}) proceeds as follows:
\begin{align}\label{eq:11_a}
    h_1(x) &= h_1(x), \nonumber\\
    h_i(x) &= \bar{\varrho}_{i-1} \Upsilon(t)^{\vartheta(i-1)} h_{i-1}(x) + L_f h_{i-1}(x),
\end{align}
for $i=\{2, \ldots, n\}$, where the aggressiveness of the variable gain scaling is controlled by the tuning factor $\vartheta \geq 1$. while, the gains $\bar{\varrho}_{i-1}$ are chosen to ensure that each $h_i(x, t_0)$ is initially positive: 
\begin{equation}\label{eq:10_a}
    \bar{\varrho}_{i-1} > \max\!\left\{ 0,\; -\,\frac{L_f h_{i-1}(x_0)}{\Upsilon(t_0)^{\vartheta(i-1)}\,h_{i-1}(x_0)} \right\}.
\end{equation}
Consequently, the QP problem is formulated as:
\begin{align}\label{eq:13_a}  
u &= \arg\min_{u} \; \|u - u_{\mathrm{no}}\|^2 \nonumber \\ 
\text{s.t. } & {\dot{h}_n(x,u)} \geq -\bar{\varrho}_{n} \, \Upsilon(t)^{\vartheta n}\, h_n(x),
\end{align}
where $\bar{\varrho}_{n}>0$. An explicit solution can also be obtained by using the KKT optimality conditions:
\begin{align}\label{eq:18_b} 
u = \begin{cases} 
u_{no}, & \zeta(x, u_{no}) \geq 0, \\ 
u_{no} - (L_g h_n(x))^\top \frac{\zeta(x, u_{no})}{\|L_g h_n(x)\|^2}, & \text{otherwise},
\end{cases}
\end{align}
wherein
\begin{align}\label{eq:19_a} 
\zeta(x, u_{no}) &:= {\dot{h}_n(x,u_{no})} +\bar{\varrho}_{n} \, \Upsilon(t)^{\vartheta n}\, h_n(x).
\end{align}
Similarly, for the fully perturbed system (\ref{eq:1_d}), the application of backstepping transformation gives:
\begin{small}
\begin{align}\label{eq:11_b}
    h_1(x) &= h_1(x), \nonumber\\
    h_i(x) &= \varrho_{i-1} \Upsilon(t)^{\vartheta(i-1)} h_{i-1}(x) + L_f h_{i-1}(x)-\Lambda_{i-1}(x),
\end{align}
\end{small}
for $i=\{2, \ldots, n\}$ and $\Lambda_{i-1}(x)= \frac{1}{4\mu_{i-1}}\,\|\frac{\partial h_{i-1}}{\partial x}\|^2 + \mu_{i-1}\theta^2$, while adjusting the initial gains to incorporate the perturbation bound:
\begin{equation}\label{eq:10_b}
    \varrho_{i-1} > \max\!\left\{ 0,\; \,\frac{-L_f h_{i-1}(x_0)+\Lambda_{i-1}(x_0)}{\Upsilon(t_0)^{\vartheta(i-1)}\,h_{i-1}(x_0)} \right\}.
\end{equation}
Resultantly, the QP problem and its explicit solution is expressed as:
\begin{align}\label{eq:13_b}  
u &= \arg\min_{u} \; \|u - u_{\mathrm{no}}\|^2 \nonumber \\ 
\text{s.t. } &{\dot{h}_n(x,u)} -\Lambda_{n}(x) \geq -\varrho_{n} \, \Upsilon(t)^{\vartheta n}\, h_n(x).
\end{align}
\begin{align}\label{eq:18_c} 
u = \begin{cases} 
u_{no}, & \zeta(x, u_{no}) \geq 0, \\ 
u_{no} - (L_g h_n(x))^\top \frac{\zeta(x, u_{no})}{\|L_g h_n(x)\|^2}, & \text{otherwise},
\end{cases}
\end{align}
where
\begin{align}\label{eq:19_d} 
\zeta(x, u_{no}) &:= {\dot{h}_n(x,u_{no})} -\Lambda_{n}(x) +\varrho_{n} \, \Upsilon(t)^{\vartheta n}\, h_n(x).
\end{align}
\begin{proposition}\label{pro:2}
If the system (\ref{eq:1}) is initially safe, i.e., $h_1(x_0) > 0$, and Assumption~\ref{eq:ass_4} satisfies, then the control law (\ref{eq:18_b}), synthesized employing the backstepping transformation (\ref{eq:11_a}) with the initial gain (\ref{eq:10_a}) and $\bar{\varrho}_n > 0$, ensures that $h_1(x(t))\geq 0$ $\forall t \in [t_0, \infty)$.
\end{proposition}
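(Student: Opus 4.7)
The plan is to mirror the two-level argument of Proposition~\ref{pro:1} recursively along the $n$-level cascade produced by the backstepping transformation (\ref{eq:11_a}). The proof rests on three ingredients: a forward induction that establishes $h_i(x_0) > 0$ for every $i \in \{1,\ldots,n\}$, a linear differential inequality for each $h_i$ along the closed-loop trajectory, and a backward cascade of the comparison lemma that propagates positivity from $h_n$ down to $h_1$.

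First I would verify the initial-positivity cascade by induction on $i$. The base case is the hypothesis $h_1(x_0) > 0$, and for $i \geq 2$ the gain choice (\ref{eq:10_a}) is by design the threshold that forces
\begin{equation*}
\bar{\varrho}_{i-1}\,\Upsilon(t_0)^{\vartheta(i-1)} h_{i-1}(x_0) + L_f h_{i-1}(x_0) > 0,
\end{equation*}
which by (\ref{eq:11_a}) is precisely $h_i(x_0) > 0$. Assumption~\ref{eq:ass_4} guarantees the smoothness required for all Lie derivatives generated by the recursion.

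Second I would derive the dynamics of each $h_i$. Because $h_1$ has relative degree $n$ on the chain of integrators, one checks by induction on $i$ that $L_g h_i(x) \equiv 0$ for $i < n$ and $L_g h_n(x) \neq 0$. Substituting (\ref{eq:11_a}) (applied to $h_{i+1}$) into $L_f h_i$ and appending the time-partial gives, for $2 \leq i \leq n-1$,
\begin{equation*}
\dot h_i \;=\; \bar{\varrho}_{i-1}\,\tfrac{d}{dt}\!\bigl[\Upsilon(t)^{\vartheta(i-1)}\bigr]\,h_{i-1} \;+\; h_{i+1} \;-\; \bar{\varrho}_i\,\Upsilon(t)^{\vartheta i}\,h_i,
\end{equation*}
with the analogous identity at $i = n$ augmented by $L_g h_n(x)\,u$. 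The QP constraint in (\ref{eq:13_a}) then enforces $\dot h_n \geq -\bar{\varrho}_n \Upsilon(t)^{\vartheta n} h_n$. Crucially, because $\Upsilon$ is strictly increasing, each time-partial correction $\bar{\varrho}_{i-1}\tfrac{d}{dt}[\Upsilon^{\vartheta(i-1)}]\,h_{i-1}$ is nonnegative whenever the lower-index $h_{i-1}$ is nonnegative.

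Finally I would close the positivity claim by a coupled maximal-interval argument. Let $[t_0, t^*)$ be the maximal interval on which every $h_i(x(t)) > 0$; by continuity and the strict initial positivity above, $t^* > t_0$. On this interval all time-partial corrections have nonnegative sign, so the cascade of inequalities collapses to $\dot h_i \geq -\bar{\varrho}_i \Upsilon^{\vartheta i} h_i + h_{i+1}$ for $i < n$ and $\dot h_n \geq -\bar{\varrho}_n \Upsilon^{\vartheta n} h_n$. Applying the variation-of-constants formula and the comparison lemma in the style of (\ref{eq:25_a})--(\ref{eq:27_a}), starting from $i = n$ and iterating downward, produces strictly positive exponential lower bounds on each $h_i(x(t))$; hence no $h_i$ can vanish at $t^*$, so $t^* = \infty$ and in particular $h_1(x(t)) > 0$ for all $t \geq t_0$. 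The main obstacle I anticipate is precisely the coupling in the cascade: the $h_i$-dynamics contain explicit time-partial contributions whose favorable sign relies on the very positivity one is trying to prove, so a naive top-down induction does not close and the maximal-interval continuity argument is essential. A secondary but necessary point is verifying $L_g h_n(x) \neq 0$ uniformly for QP feasibility, which follows from Assumption~\ref{eq:ass_4} together with the companion structure of $(A,b)$.
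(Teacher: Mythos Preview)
Your argument follows the same three-step skeleton as the paper: forward induction on $i$ to secure $h_i(x_0)>0$ from the gain rule (\ref{eq:10_a}), the differential inequalities $\dot h_k \ge -\bar\varrho_k\Upsilon^{\vartheta k}h_k + h_{k+1}$ together with the QP constraint at level $n$, and then a backward cascade via variation of constants and the comparison lemma to conclude $h_1(x(t))>0$.

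Where you differ is in a point the paper glosses over. Because $h_i$ for $i\ge 2$ depends explicitly on $t$ through $\Upsilon(t)^{\vartheta(i-1)}$, the total derivative $\dot h_i$ contains the extra term $\bar\varrho_{i-1}\tfrac{d}{dt}[\Upsilon^{\vartheta(i-1)}]\,h_{i-1}$; the paper's identity (\ref{eq:28}) is written as an equality that omits it. You correctly observe that this correction is nonnegative only if $h_{i-1}\ge 0$, which is exactly what is being proved, and you close the resulting circularity with a maximal-interval argument. That refinement makes your version strictly more rigorous than the paper's presentation while buying nothing extra in generality---the paper's final bound (\ref{eq:33}) and yours coincide. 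Your remark on $L_g h_n\neq 0$ for QP feasibility is also a point the paper leaves implicit.
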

\begin{proof}
   Given $h_{i-1}(x_0) > 0$, the initial gain (\ref{eq:10_a}) is designed to guarantee $h_i(x_0) > 0$. Since we know $h_1(x_0) > 0$, the gain choice 
\begin{equation}\label{eq:10_aa}
   \bar{\varrho}_1 > \max\!\left\{ 0,\; -\,\frac{L_f h_1(x_0)}{\Upsilon(t_0)\,h_1(x_0)} \right\},
\end{equation}
ensures $h_2(x_0) > 0$. Thus, by induction, we can conclude $h_i(x_0) > 0$ for $i=\{2, \ldots, n\}$. Then, by differentiating the chain of CBFs (\ref{eq:11_a}) gives: 
\begin{align}\label{eq:28} 
\frac{d}{dt}h_k(x(t))&=-\bar{\varrho}_{k} \Upsilon(t)^{\vartheta k}h_{k}(x(t)) + h_{k+1}(x(t)),
\end{align}
\begin{align}\label{eq:29} 
\frac{d}{dt}h_n(x(t))&=L_f h_n(x(t))+L_g h_n(x(t))u \nonumber\\
                    &\geq  -\bar{\varrho}_{n} \, \Upsilon(t)^{\vartheta n}\, h_n(x(t)),
\end{align}
for $k=\{1, \ldots, n-1\}$.
Employing the  comparison lemma and the variation of constants formula, for $t \in [t_0, \infty)$, leads to
\begin{align}\label{eq:30}
  h_{k}(x(t)) &\ge h_{k}(x({t_0})) e^{-\bar{\varrho}_k \int_{t_{0}}^{t} \Upsilon(s)^{\vartheta k}\,ds} \nonumber\\
&\quad + \int_{t_{0}}^{t} h_{k+1}(s) e^{-\bar{\varrho}_k \int_{t_0}^{t} \Upsilon(\tau)^{\vartheta k}\,d\tau} ds,  
\end{align}
\begin{small}
\begin{align}\label{eq:31}
  h_{n}(x(t)) &\ge h_{n}(x(t_{0})) e^{-\bar{\varrho}_n \int_{t_{0}}^{t} \Upsilon(s)^{\vartheta n}\,ds}, 
\end{align}    
\end{small}
for $k=\{1, \ldots, n-1\}$. As established before, $h_n(x(t_0))>0$, which implies from (\ref{eq:31}) that $h_n(x(t))>0$ for all $t \in [t_0, \infty)$. Now, substituting (\ref{eq:31}) into (\ref{eq:30}) for $k=n-1$ results in
\begin{align}\label{eq:32}
  h_{n-1}(x(t)) &\ge \underbrace{h_{n-1}(x({t_0})) e^{-\bar{\varrho}_{n-1} A(t)}}_{>0} \nonumber\\
& + \underbrace{ h_n(x(t_0)) e^{-\bar{\varrho}_{n-1} A(t)} \int_{t_0}^{t} e^{\bar{\varrho}_{n-1} A(\tau) - \varrho_{n} B(\tau) } d\tau}_{ \geq 0 }, \nonumber \\
& \geq h_{n-1}(x({t_0})) e^{-\bar{\varrho}_{n-1} A(t)} > 0,
\end{align}
where $\bar{A}(t)=\int_{t_0}^{t} \Upsilon(s)^{\vartheta(n-1)}  ds$ and $\bar{B}(t)=\int_{t_0}^{t} \Upsilon(s)^{\vartheta n}  ds$. By applying backward strong induction with (\ref{eq:31}) and (\ref{eq:32}),
it can be demonstrated that
\begin{align}\label{eq:33}
  h_{1}(x(t)) &\ge h_{1}(x(t_{0})) e^{-\bar{\varrho}_1 \int_{t_{0}}^{t} \Upsilon(s)\,ds}, 
\end{align} 
which implies
\begin{align}\label{eq:34}
  h_{1}(x(t)) &\ge  0, \quad \forall t \in [t_0, \infty). 
\end{align} 
\end{proof}
\begin{theorem}\label{th:2}
   For the disturbed system (\ref{eq:1_d}), assume that the initial state is safe, i.e., $h_1(x_0) > 0$, and that it satisfies Assumptions~\ref{eq:ass_1} and~\ref{eq:ass_4}. Then, the control law (\ref{eq:18_c}), derived using the backstepping transformation (\ref{eq:11_b}) with the initial gain (\ref{eq:10_b}) and $\varrho_n > 0$, ensures that $h_1(x(t))\geq 0$ $\forall t \in [t_0, \infty)$. 
\end{theorem}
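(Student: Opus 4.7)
The plan is to combine the backstepping induction strategy from Proposition~\ref{pro:2} with the smooth disturbance-bounding technique employed in Theorem~\ref{th:1}. First I would verify that the initial-positivity cascade $h_i(x_0)>0$ for $i=1,\dots,n$ still holds under the modified gain choice (\ref{eq:10_b}): given $h_1(x_0)>0$, the inequality (\ref{eq:10_b}) for $i=2$ directly yields $h_2(x_0)=\varrho_1\Upsilon(t_0)h_1(x_0)+L_f h_1(x_0)-\Lambda_1(x_0)>0$, and a straightforward induction on $i$ using (\ref{eq:11_b}) and (\ref{eq:10_b}) propagates positivity up through $h_n(x_0)$.

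Next I would differentiate each transformed function along the perturbed dynamics (\ref{eq:1_d}). For $k\in\{1,\dots,n-1\}$, the derivative $\dot h_k(x)=L_f h_k(x)+\tfrac{\partial h_k}{\partial x}d$, combined with Assumption~\ref{eq:ass_1} and the smooth majorization (\ref{eq:5_a}), gives $\tfrac{\partial h_k}{\partial x}d \ge -\|\tfrac{\partial h_k}{\partial x}\|\,\theta \ge -\Lambda_k(x)$, so that
\begin{align*}
\dot h_k(x) &\ge L_f h_k(x)-\Lambda_k(x) \\
            &= -\varrho_k\Upsilon(t)^{\vartheta k}h_k(x)+h_{k+1}(x),
\end{align*}
by the very definition of $h_{k+1}$ in (\ref{eq:11_b}). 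For $k=n$, the QP constraint in (\ref{eq:13_b}) is designed precisely so that, after applying the same disturbance-bounding argument to $\tfrac{\partial h_n}{\partial x}d$, one obtains $\tfrac{d}{dt}h_n(x(t))\ge -\varrho_n\Upsilon(t)^{\vartheta n}h_n(x(t))$.

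Having recovered the same cascade of differential inequalities as in (\ref{eq:28})-(\ref{eq:29}), I would then reuse the machinery of Proposition~\ref{pro:2} verbatim: apply the comparison lemma to the $n$-th inequality to conclude $h_n(x(t))\ge h_n(x_0)\exp\bigl(-\varrho_n\int_{t_0}^{t}\Upsilon(s)^{\vartheta n}ds\bigr)>0$, and then use the variation-of-constants formula with backward strong induction on $k=n-1,n-2,\dots,1$ to obtain a strictly positive lower bound for every $h_k(x(t))$. Specializing to $k=1$ delivers $h_1(x(t))\ge h_1(x_0)\exp\bigl(-\varrho_1\int_{t_0}^{t}\Upsilon(s)^{\vartheta}ds\bigr)\ge 0$ for all $t\in[t_0,\infty)$, which is the claim.

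The main obstacle is technical rather than conceptual: the smooth majorization step must be invoked \emph{at every level} of the chain, not only at the two uppermost levels as in the double-integrator proof. Concretely, one must verify that the perturbation term $\tfrac{\partial h_k}{\partial x}d$ is absorbed exactly by the $-\Lambda_k(x)$ contribution baked into the definition of $h_{k+1}$, so that the residual cascade collapses to the disturbance-free form already handled by Proposition~\ref{pro:2}. Once this bookkeeping is carried out and the gains (\ref{eq:10_b}) are confirmed to preserve initial positivity at each stage, no new analytical machinery is required, and the proof closes by induction.
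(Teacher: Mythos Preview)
Your proposal is correct and follows essentially the same route as the paper: establish the initial-positivity cascade $h_i(x_0)>0$ via the gain choice~(\ref{eq:10_b}), derive the differential inequalities $\dot h_k \ge -\varrho_k\Upsilon(t)^{\vartheta k}h_k + h_{k+1}$ by absorbing $\tfrac{\partial h_k}{\partial x}d$ into $-\Lambda_k(x)$ at each level, and then close with the comparison lemma, variation of constants, and backward strong induction exactly as in Proposition~\ref{pro:2}. If anything, you are more explicit than the paper about the per-level disturbance-bounding step (the paper writes~(\ref{eq:28_a}) as an equality where your inequality, obtained from Assumption~\ref{eq:ass_1} and~(\ref{eq:5_a}), is what is actually meant).
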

\begin{proof}
    Given $h_{i-1}(x_0) > 0$, the initial gain (\ref{eq:10_b}) is designed to ensure $h_i(x_0) > 0$. Since we know $h_1(x_0) > 0$, the gain choice
\begin{align}\label{eq:35}
     \varrho_1 > \max\!\left\{ 0,\; \frac{-L_f h_1(x_0)+\Lambda_{1}(x_0)}{\Upsilon(t_0)\,h_1(x_0)} \right\},
\end{align}
guarantees $h_2(x_0) > 0$. Hence, by induction it follows that $h_i(x_0) > 0$ for $i=\{2, \ldots, n\}$. Differentiating the chain of CBFs (\ref{eq:11_b}) yields: 
\begin{align}\label{eq:28_a} 
\frac{d}{dt}h_k(x(t))&=-\varrho_{k} \Upsilon(t)^{\vartheta k}h_{k}(x(t)) + h_{k+1}(x(t)),
\end{align}
\begin{align}\label{eq:29_a} 
\frac{d}{dt}h_n(x(t))&=L_f h_n(x(t))+L_g h_n(x(t))u-\Lambda_{n}(x) \nonumber\\
                    &\geq  -\varrho_{n} \, \Upsilon(t)^{\vartheta n}\, h_n(x(t)),
\end{align}
for $k=\{1, \ldots, n-1\}$. Using the variation of constants formula and the comparison lemma, for $t \in [t_0, \infty)$, one can get
\begin{align}\label{eq:30_a}
  h_{k}(x(t)) &\ge h_{k}(x({t_0})) e^{-\varrho_k \int_{t_{0}}^{t} \Upsilon(s)^{\vartheta k}\,ds} \nonumber\\
&\quad + \int_{t_{0}}^{t} h_{k+1}(s) e^{-\varrho_k \int_{t_0}^{t} \Upsilon(\tau)^{\vartheta k}\,d\tau} ds,  
\end{align}
\begin{small}
\begin{align}\label{eq:31_a}
  h_{n}(x(t)) &\ge h_{n}(x(t_{0})) e^{-\varrho_n \int_{t_{0}}^{t} \Upsilon(s)^{\vartheta n}\,ds}, 
\end{align}    
\end{small}
for $k=\{1, \ldots, n-1\}$. As previously stated, $h_n(x(t_0))>0$. Consequently, relation (\ref{eq:31_a}) ensures $h_n(x(t))>0$ for all $t \in [t_0, \infty)$. Now , substituting (\ref{eq:31_a}) into (\ref{eq:30_a}) for $k=n-1$ results in
\begin{align}\label{eq:32_a}
  h_{n-1}(x(t)) &\ge \underbrace{h_{n-1}(x({t_0})) e^{-\varrho_{n-1} A(t)}}_{>0} \nonumber\\
& + \underbrace{ h_n(x(t_0)) e^{-\varrho_{n-1} A(t)} \int_{t_0}^{t} e^{\varrho_{n-1} A(\tau) - \varrho_{n} B(\tau) } d\tau}_{ \geq 0 }, \nonumber \\
& \geq h_{n-1}(x({t_0})) e^{-\varrho_{n-1} A(t)} > 0,
\end{align}
where $A(t)=\int_{t_0}^{t} \Upsilon(s)^{\vartheta(n-1)}  ds$ and $B(t)=\int_{t_0}^{t} \Upsilon(s)^{\vartheta n}  ds$. By using backward strong induction with (\ref{eq:31_a}) and (\ref{eq:32_a}),
it follows that
\begin{align}\label{eq:33_a}
  h_{1}(x(t)) &\ge h_{1}(x(t_{0})) e^{-\varrho_1 \int_{t_{0}}^{t} \Upsilon(s)\,ds}, 
\end{align} 
hence
\begin{align}\label{eq:34_a}
  h_{1}(x(t)) &\ge  0, \quad \forall t \in [t_0, \infty). 
\end{align} 
\end{proof}




\section{Simulations}
The double integrator system, represented by $\dot{p} = v + d_{1}(t)$ and $\dot{v} = u + d_{2}(t)$, is used in the simulation to demonstrate the effectiveness of the proposed strategy. Both mismatched disturbances $d_{1}(t)$ and matched disturbances $d_{2}(t)$ are introduced. The position, velocity, disturbances, and control input are captured by $p$, $v$, ${d_{1}, d_{2}}$, and $u \in \mathbb{R}^2$, respectively. In addition, the goal is to drive the system to the desired position $p_d \in \mathbb{R}^2$ while subjected to perturbations, ensuring effective avoidance of an obstacle centered at $x_c \in \mathbb{R}^2$ with radius $r > 0$. Refraining Collision is recorded using the safe set $\mathcal{S}$ with: $h_{1}(p)=\frac{1}{2}(||p-x_{c}||^{2}_{2}-r^{2})$. The perturbation profiles are set as $d_1(t)= [0.1sin(2t)+0.02\mathrm{rnd}(1); 0.1cos(3t)+0.02\mathrm{rnd}(1)]$ and $d_2(t)= [0.15sin(t)+0.02\mathrm{rnd}(1); 0.15cos(2t)+0.02\mathrm{rnd}(1)]$. The parameter $\varrho_{1}=2.7$ is chosen such that $h_2(p_0)>0$. As demonstrated in Theorem~\ref{th:1} that ensuring $h_2(p_0)>0$ ensures system safety, while $\varrho_{2}=3$ is taken as the initial gain with a factor of smoothness $\mu_1=\mu_2=0.2$ and $\Upsilon(t)=1+t$. Moreover, the simulation involves two different controllers. The first one employs the standard CBF backstepping method, referred to as (SBCBF) from~\cite{krstic2006nonovershooting}, and the second one employs our proposed methodology. Furthermore, the nominal objective is to stabilize the system at $p_d$ to achieve this, we employ the nominal controller prescribed in~\cite{labbadi2024hyperexponential}. 

The corresponding system trajectories are illustrated in Fig.~\ref{fig:result1}. To demonstrate the original behavior without a safety filter, we also simulate the nominal controller. As anticipated, the nominal controller stabilizes the system without taking the safety into consideration. Fig.~\ref{fig:result2} shows the profile of the desired candidate CBF $h_{1}(p)$ and the control input for both controllers. In Fig.~\ref{fig:result1}, the SBCBF does not account for the perturbations, resulting in a collision and thus violating safety. In contrast, our proposed approach demonstrates robustness, adheres to the safety constraint, and entails minimal control effort.


\begin{figure}[]
    \centering
    \includegraphics[scale=0.58]{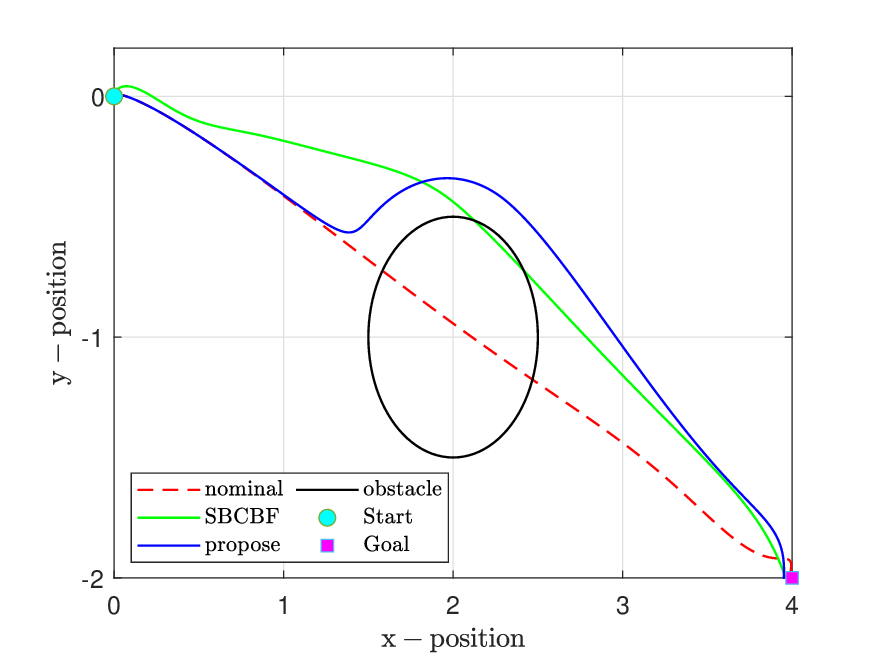}
    \caption{Double integrator system under nominal law, SBCBF, and the proposed safety filter. The nominal controller reaches the desired goal but intersects the obstacle. The SBCBF fails to prevent collision under perturbations, as indicated by the trajectory portion inside the obstacle. In contrast, the proposed safety filter accomplishes the objective by effectively guiding the system to evade the obstacle, thus preventing a collision.}
    \label{fig:result1}
\end{figure}
\begin{figure}[]
    \centering
    \includegraphics[scale=0.58]{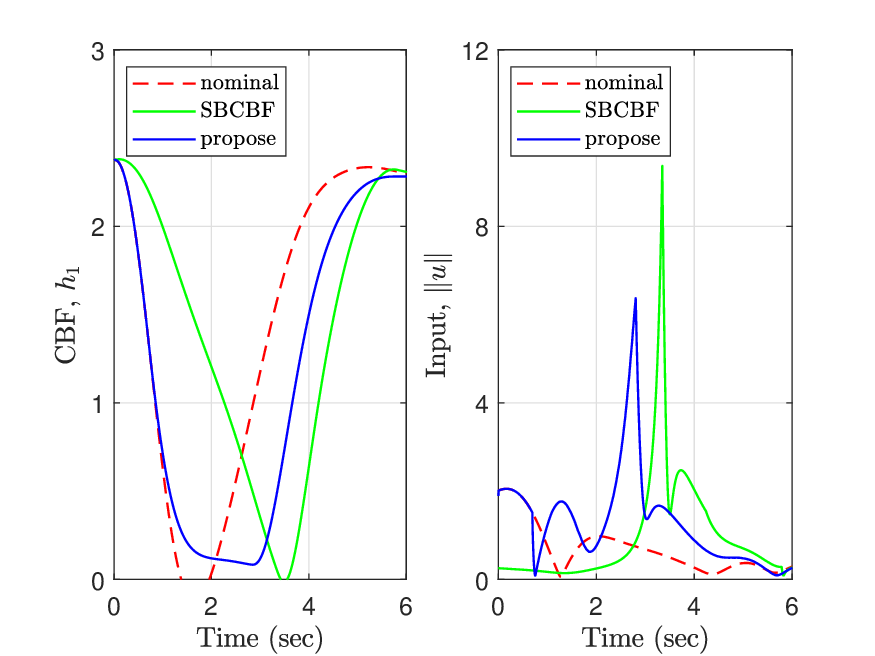}
    \caption{The evolution of the desired candidate CBF $h_1(p)$ wrt time for each controller is presented. This illustrates that, in contrast to the nominal controller and SBCBF, our proposed method guarantees $h_1(p)>0$, thereby maintaining safety $\forall t \geq 0$. Although the nominal law requires less effort than our method, it does not consider safety. Moreover, the SBCBF spends more control effort in comparison to our proposed technique, yet still fails to ensure safety.}
    \label{fig:result2}
\end{figure}

\section{Conclusions}
We propose a robust safety-critical control framework for chains of integrators subject to matched and mismatched perturbations, inspired by hyperexponential stability for nonautonomous systems. By integrating linear time-varying feedback, backstepping, and quadratic programming, the proposed method guarantees both safety and robustness while avoiding the singularities inherent in prescribed-time approaches. Rigorous theoretical guarantees are established for the double integrator case, and simulation results validate the effectiveness of the approach. This framework provides a systematic foundation for extending robust safety-critical control to higher-order and more general nonlinear systems.

\bibliographystyle{IEEEtran}
\balance
\bibliography{bib}

\end{document}